\DeclareMathAlphabet{\mathcal}{OMS}{cmsy}{m}{n}
\setlist[itemize,1]{leftmargin=8pt,itemindent=3pt,itemsep=2pt,topsep=2pt,parsep=1pt,partopsep=1pt}
\setlist[enumerate,1]{leftmargin=8pt,
  itemindent=3pt,
  itemsep=2pt,topsep=2pt,parsep=1pt,partopsep=1pt}
\setlist[description,1]{leftmargin=5pt,itemindent=3pt,itemsep=2pt,topsep=2pt,parsep=1pt,partopsep=1pt}
\colorlet{darkgreen}{green!50!black}
\colorlet{color1}{blue!70!black}
\colorlet{color2}{red!70!black}
\colorlet{color3}{green!65!black}
\colorlet{myblue}{color1}
\colorlet{myred}{color2}
\colorlet{mygreen}{color3}
\colorlet{contourplotscolor}{black}
\tikzset{plotnode/.style={fill=white,shape=circle,inner sep=0pt,font=\small}}
\tikzset{cntnode/.style={fill=white,shape=circle,inner sep=1pt,font=\small,draw}}
\pgfplotsset{classiflabels/.style={
    xlabel={$\gamma_{2}$},
    ylabel={$\Gamma_{2}$},
    ylabel near ticks,
    every axis x label/.style={at={(ticklabel* cs:1,2pt)},anchor=north west},
    every axis y label/.style={at={(ticklabel cs:0.97)},anchor=near ticklabel}
  }}
\pgfplotsset{classifwidth/.style={width=0.45\textwidth}}
\renewcommand{\thesubfigure}{\thefigure.\arabic{subfigure}}
\renewcommand{\@thesubfigure}{\thesubfigure:\space}
\renewcommand{\p@subfigure}{}
\newtheorem{theorem}{Theorem}
\newtheorem{lemma}[theorem]{Lemma}
\newtheorem{proposition}[theorem]{Proposition}
\newtheorem{remark}[theorem]{Remark}
\newtheorem{definition}[theorem]{Definition}
\newcommand{\R}{\mathbb{R}}
\def\g{\ensuremath{\mathbf{g}}}
\def\mydet{\ensuremath{{D}}}
\newcommand{\NN}{\mathbb{N}}
\newcommand{\RR}{\mathbb{R}}
\newcommand{\QQ}{\mathbb{Q}}
\newcommand{\CC}{\mathbb{C}}
\newcommand{\PP}{\mathbb{P}}
\newcommand{\Jac}{\mathrm{Jac}}
\newcommand{\rk}{\mathrm{rank}}
\newcommand{\sing}{\mathrm{sing}}
\newcommand{\crit}{\mathrm{crit}}
\newcommand{\algores}{\mathsf{res}}
\newcommand{\card}{\#}
\newcommand{\boundary}[1]{\partial #1}
\newcommand{\closure}[1]{\overline{#1}}
\newcommand{\ind}{\hspace{5mm}} 
\newcommand{\alname}[1]{\textup{\textsf{#1}}}
\newcommand{\algocorkone}{\alname{RankExactly}}
\newcommand{\algocrit}{\alname{DeterminantCritVals}}
\newcommand{\algoboundary}{\alname{DeterminantBoundary}}
\newcommand{\funElim}{\textsf{Elimination}}
\newcommand{\incidvar}[1]{\mathcal{V}_{#1}}
\newcommand{\incidsys}[2]{\mathcal{F}_{#1,#2}}
\newcommand{\nummins}{N}
\newcommand{\pbox}[2][l]{%
  \begin{tabular}[c]{@{}#1@{}}#2\end{tabular}%
}
\newcommand{\hyp}[1]{$\mathcal{H}#1$}
\newcommand{\Id}[1]{\mathrm{Id}_{#1}}
\newenvironment{myalgo}{\enumerate[nosep] \small}{\endenumerate}
\newenvironment{myhypenum}{\enumerate[leftmargin=15pt]}{\endenumerate}
\newcommand{\myvspace}[1]{}
\newcommand{\pfrac}[2]{\frac{\partial #1}{\partial #2}}
\begin{document}

\CopyrightYear{2016} 
\setcopyright{acmcopyright}
\conferenceinfo{ISSAC '16,}{July 19-22, 2016, Waterloo, ON, Canada}
\isbn{978-1-4503-4380-0/16/07}\acmPrice{\$15.00}
\doi{http://dx.doi.org/10.1145/2930889.2930916}



\title
{Determinantal Sets, Singularities and Application\\ to Optimal Control in Medical Imagery}

\newcommand{\superscript}[1]{\ensuremath{^{\textrm{#1}}}}
\def\upmc{\superscript{a}}
\def\cnrs{\superscript{c}}
\def\inriapolsys{\superscript{d}}
\def\dijon{\superscript{b}}
\def\iuf{\superscript{f}}
\def\inriamactao{\superscript{e}}

\def\sharedaffiliation{\end{tabular}

  \caption{Timings}
  \label{tab:timings}
\end{table*}}

\label{sec:Proof-Theor-refthm:G}

\begin{proof}[of Theorem~\ref{thm:GBsingul}]
  Let $V_{=3} = \{p \in \CC^{4}\times\RR^{2} \mid \rk(M) = 3\}$ and $V_2 = \{p \in \CC^{4}\times\RR^{2} \mid \rk(M) < 3\}$, where
$p=(y_1,y_2,z_1,z_2,\gamma_{2},\Gamma_{2})$.
  We apply the strategies described in Section~\ref{sec:Algorithm}.
  
  We study the generic case $V_2 \cap V$ first.
  This set does cover a dense subset of $\RR^{2}$.
  Its intersection with the boundary of $\mathcal{B}$ is given by the vanishing of either $h_{1}$ or $h_{2}$.
  The projection on $(\Gamma_{2},\gamma_{2})$ of the set of points of $V_2 \cap V$ such that $h_{1}=0$ is described by
    $0 = \gamma_{2} f_{1}^{2} f_{2} f_{3}$
  which gives us polynomials $f_{1}$, $f_{2}$ and $f_{3}$.

  The projection on $(\Gamma_{2},\gamma_{2})$ of the set of points of $V_2 \cap V$ such that $h_{2}=0$ is described by
    $0 = \left( 2\,\Gamma_{2}-\gamma_{2} \right) f_{1}^{2} f_{4} f_{5}$
  which gives us new polynomials $f_{4}$ and $f_{5}$.
  
  Next, we consider the incidence variety $\incidvar{2}$ associated with the matrix $M$:\par\nointerlineskip
\begin{equation}\small
    M \cdot
    \begin{psmallmatrix}
      \lambda_{1,1} & \lambda_{1,2} \\
      \lambda_{2,1} & \lambda_{2,2} \\
      \lambda_{3,1} & \lambda_{3,2} \\
      \lambda_{4,1} & \lambda_{4,2}
    \end{psmallmatrix}
    =
    \begin{psmallmatrix}
      0 & 0 \\
      0 & 0 \\
      0 & 0 \\
      0 & 0       
    \end{psmallmatrix}
  \end{equation}
with random linear equations ensuring that the matrix $(\lambda_{i,j})$ has rank $2$.
  
  Out of the surface $\gamma_{2} = 0$, this affine variety is a complete intersection (it has dimension $2$ and it is given by $9$ equations in $11$ variables, including the saturation by $\gamma_{2}$).
  The set of critical values of $\pi$ is described by
    $0 = \left( 2\,\Gamma_{2}-\gamma_{2} \right)(\Gamma_{2}+1) f_{1}^{2} f_{6}^{2} f_{7}^{2}$
  which gives us new polynomials $f_{6}$ and $f_{7}$ ($\Gamma_{2}+1$ has no solutions within our constraint range).
  
  This completes the study of $V \cap V_2$.
  We now move on to the study of $V \cap V_{=3}$.
  As described in the algorithm, we define the incidence variety of rank $3$ of $M$, and we saturate successively by the $3$-minors of $M$.
  Only the first of these subcases is nonempty, and it is described by
    $0 = \left( 2\,\Gamma_{2} - \gamma_{2} \right) f_{8} f_{9} $
  which gives us $f_{8}$ and $f_{9}$.
\end{proof}

\begin{proof}[of Theorem~\ref{thm:DCAsingul}]
  Observe first by means of a trivial evaluation that $O$ is a singularity of $\{ D =0 \}$. We now focus on singularities in ${\mathcal B}^{*}={\mathcal B}\setminus \{O\}$.
  Theorem~\ref{thm:GBsingul} provides a list of 9 polynomials to which we add our constraints 
  $2\,\Gamma_{2} \ge \gamma_{2} > 0$.
  Let $\xi=\gamma_2\,\Gamma_2\,\left( \gamma_2 - 2\, \Gamma_2 \right) \, \prod_{i=1}^{9} f_i$.
The complementary of $\{ \xi=0 \}$ is the union of a sequence of connected open semi-algebraic sets where the number of singularities is constant.
The routine \textsf{CylindricalAlgebraicDecompose} of the Maple package \textsf{RegularChains[Semi\-AlgebraicSetTools]}
 provides 1533~sample points
 .
Excluding those at which $\xi$ vanishes and those outside our physical constraints domain, remains a set $K_c$ of $570$ points.
At each point of $K_c$ we locate the singularities by computing a Gr\"obner basis.

  We get $187$ points  of $K_c$
  such that there exists at least one singularity in ${\mathcal B}^{*}$.
  We have a set $K_s$ of $31$ points, each of them corresponding to a couple of $\psi$-symmetric singularities outside the symmetry plane $\Pi$, 
  and a set $K_p$ of $156$ points corresponding to a unique singularity on  $\Pi \cap {\mathcal B}^{*}$. 
  For parameters at which $\xi$ does not vanish, the number of singularities in ${\mathcal B}^{*}$ is at most two.

  Points of $K_s$ (\textit{resp.} $K_p$) are represented in green  (\textit{resp.} blue)  in Figs. \ref{fig:bifurRealSingSimple}, \ref{fig:bifurRealSing} and \ref{fig:bifurRealSingLoupe}.
  Let us evaluate on $K_c$ the condition ($\Gamma_2 < 1$,  $f_2> 0$, $f_4 <0$) or ($\Gamma_2 >1$, $f_2 < 0$, $f_4 >0$). 
  Indeed the set of points of $K_c$ satisfying this condition coincides with $K_p$.
  This proves item 1).
  The proof of item 2) is similar.
\end{proof}

\begin{figure}
  \centering
  \centerline{\includegraphics{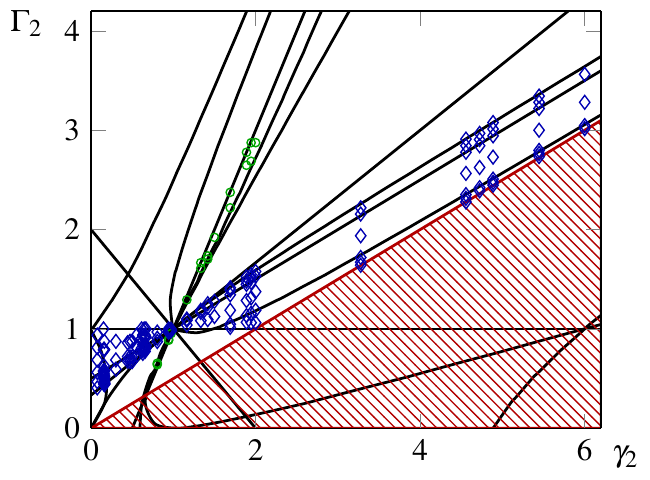}}
\vskip-3mm
\caption{The curves involved in the decomposition of 
  the parameter space (with the same conventions as in Fig.~\ref{fig:bifurRealSingSimple}).}
  \label{fig:bifurRealSing}
\end{figure}
\begin{figure}
  \centering
  \centerline{\includegraphics{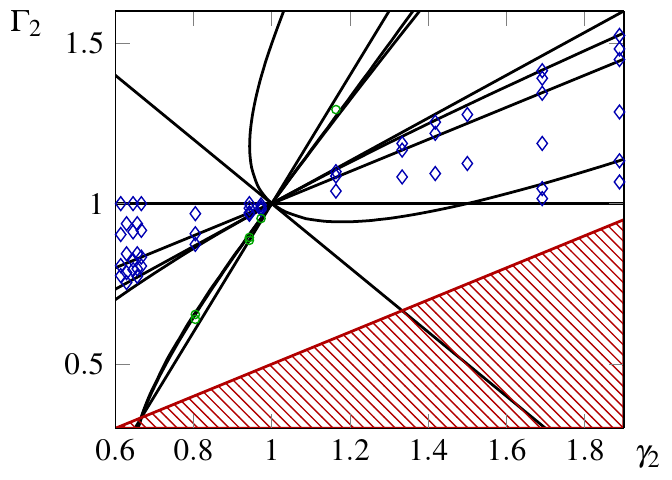}}
\vskip-3mm
\caption{Magnification of Fig.~\ref{fig:bifurRealSing} near $(1,1)$.}
  \label{fig:bifurRealSingLoupe}
\end{figure}

\subsection{The general case}
\label{sec:General-case}

The variety $V$ and the semi-algebraic set ${\mathcal B}$ are defined as in the previous section.
We normalize again by $\gamma_{1}=1$, we assume that 
$2\,\Gamma_{1} \ge 1$, $2\,\Gamma_2\ge \gamma_2>0 $, $(\gamma_2,\Gamma_2) \neq (1,\Gamma_{1})$, 
and that $\Gamma_{1} \neq 1$, $\Gamma_{2} \neq \gamma_{2}$ (case of water).


\begin{theorem}\label{thm:general-res}
  Splitting the subset of $\RR^{3}$ defined by  $2\,\Gamma_{2} > \gamma_{2}>0$ and $2\,\Gamma_{1} > 1$ into open subsets where the number of real singularities of $V$ in the fibers is constant, can be done by cutting out $12$ irreducible surfaces, consisting of $5$ planes,
 $3$ quadrics,
two surfaces of degree 9 and one of degree 14.
\end{theorem}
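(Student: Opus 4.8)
The plan is to run the same pipeline as in the proof of Theorem~\ref{thm:GBsingul}, but now with the full parameter set $\mathbf{G}=(\gamma_2,\Gamma_1,\Gamma_2)$ (after normalizing $\gamma_1=1$) rather than the two water-specialized parameters. First I would verify that hypotheses \hyp{1}--\hyp{7} hold in this more general setting: \hyp{1} and \hyp{4}--\hyp{7} are the genericity conditions from \cite[Prop.~2]{HNSjoc16} and should be checked on the explicit matrix $M$ (size $k=4$, target rank $r_0=3$, $n=4$, $t=3$); \hyp{2} (properness of $\pi$ restricted to $B$) follows from the boundedness of the Bloch ball constraints $h_1,h_2$; \hyp{3} is a dimension count on $V\cap B_0$. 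Once these are in place, Lemma~\ref{thm:classif-equiv-valcrit} reduces the problem to computing a polynomial $P\in\QQ[\gamma_2,\Gamma_1,\Gamma_2]$ whose zero set covers $K(\pi,V)\cup\pi(V\cap B_0)$, and the irreducible factors of $P$ lying in the physically relevant region give the claimed surfaces.

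Next I would split $V=(V\cap V_{=3})\cup V_2$ exactly as in the water case. For the $V_2$ (higher-corank) part, apply \algocorkone{} to produce a polynomial covering $\pi(V\cap V_{=3})$, and for the singular/critical locus apply \algocrit{}: form the incidence variety $\incidvar{2}$ of rank $3$ defined by $M\cdot(\lambda_{i,j})=0$ with random linear normalization forcing $\rk(\lambda_{i,j})=2$, take the truncated Jacobian in $(\mathbf{X},\mathbf{Y})$, add the appropriate maximal minors, and eliminate $\mathbf{X},\mathbf{Y}$ via \funElim. For the boundary contribution, run \algoboundary{} with $H=\{h_1,h_2\}$: adjoin each $h_i$ to $\incidsys{2}{\mathbf{u}}$ and eliminate. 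The correctness of these three subroutines is Propositions~\ref{prop:correct-crit} and~\ref{prop:correct-boundary} together with Proposition~\ref{prop:equiv-incid-sing}, so the only new content is the explicit elimination computation. Collecting the irreducible factors of the product of the three outputs, restricting to the cone $2\Gamma_2>\gamma_2>0$, $2\Gamma_1>1$, and discarding factors with no real zeros there (e.g.\ $\Gamma_2+1$, which appeared as a spurious factor already in the water case), yields the list of $12$ surfaces; one then reads off that $5$ of them are planes, $3$ are quadrics, two have degree $9$ and one has degree $14$.

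The main obstacle is the elimination step itself: with three parameters instead of two, the incidence system $\incidsys{2}{\mathbf{u}}$ lives in $11$ variables and the Jacobian-minor system is large, so the Gr\"obner basis (or triangular-set) computation needed by \funElim is the bottleneck — this is precisely why the paper reports $4$ hours with \textsf{FGb} or $2$ minutes with \textsf{F5}, and why general real-root-classification software fails here. A secondary subtlety is ensuring that the random linear forms $\mathbf{u}$ chosen in \algocorkone{}, \algocrit{} and \algoboundary{} are generic enough that the conclusion of Proposition~\ref{prop:equiv-incid-sing} holds (all irreducible components of $K(\pi,V_{=r_1})$ meet the open set $\mathcal{O}_{\mathbf{u}}$); in principle this should be certified a posteriori by checking the rank condition $\rk\begin{bsmallmatrix}M(\mathbf{x},\mathbf{g})\\(u_{i,j})\end{bsmallmatrix}=k$ on the computed components, or else the draw repeated. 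Finally, since the theorem only claims that the separating surfaces can be \emph{found} (not that the resulting cells are fully classified), no further sign-analysis of the cells is required, so the proof ends once the $12$ irreducible factors and their degrees are exhibited.
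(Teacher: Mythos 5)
Your proposal matches the paper's approach: the paper's entire justification for Theorem~\ref{thm:general-res} is that the twelve polynomials were obtained by applying the algorithms of Section~\ref{sec:Algorithm} (\algocorkone{}, \algocrit{}, \algoboundary{}) to the three-parameter system, with the elimination steps carried out by \textsf{FGb}, \textsf{F5} or \textsf{RegularChains}, exactly the pipeline you describe. Your write-up is in fact more explicit than the paper's about verifying hypotheses \hyp{1}--\hyp{7} and certifying the genericity of $\mathbf{u}$, and correctly identifies that the theorem's content is the exhibited list of irreducible factors and their degrees, not a cell-by-cell classification.
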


These polynomials were obtained by applying the algorithms from Section~\ref{sec:Algorithm} to our system.
The elimination steps were done using both Gröbner bases with \textsf{FGb} or with \textsf{F5}, and with triangular sets with \textsf{RegularChains}.
Table~\ref{tab:timings} presents some timings for these methods (for computations done with interpolation, we give the results as $a \times b$ where $a$ is the interpolation degree and $b$ the time taken for each specialized computation).
It shows that our algorithm, implemented with either Gr{\"o}bner bases or regular chains, is more efficient than the direct (general) approach, and  it allows to deal with the previously untractable general case.
Furthermore, the implementation using Gr\"obner bases appears to be faster.





{
  \bibliographystyle{plain}
  \bibliography{biblio}
}

\end{document}